\newcommand{\keywords}[1]{\par\addvspace\baselineskip
\noindent\keywordname\enspace\ignorespaces#1}
\newcommand{\sysacro}{LEDAkem}
\acrodef{LDPC}{low-density parity-check}
\acrodef{LDGM}{low-density generator matrix}
\acrodef{MDPC}{moderate-density parity-check}
\acrodef{QC}{quasi-cyclic}
\acrodef{QC-LDPC}{quasi-cyclic low-density parity-check}
\acrodef{QC-MDPC}{quasi-cyclic moderate-density parity-check}
\acrodef{RSA}{Rivest-Shamir-Adleman}
\acrodef{BF}{bit flipping}
\acrodef{SPA}{sum product algorithm}
\acrodef{RDF}{random difference families}
\acrodef{ISD}{information set decoding}
\acrodef{KRA}{key recovery attack}
\acrodef{DA}{decoding attack}
\acrodef{WF}{work factor}
\acrodef{BER}{bit error rate}
\acrodef{CER}{codeword error rate}
\acrodef{BSC}{binary symmetric channel}
\acrodef{BPSK}{binary phase shift keying}
\acrodef{$2$-PAM}{binary pulse amplitude modulation}
\acrodef{AWGN}{additive white Gaussian noise}
\acrodef{LLR}{log likelihood ratio}
\acrodef{SPA}{sum-product algorithm}
\acrodef{DFR}{decryption failure rate}
\acrodef{SL}{security level}
\acrodef{ECC}{elliptic curve cryptography}
\acrodef{QD}{quasi-dyadic}
\acrodef{GRS}{generalized Reed-Solomon}
\acrodef{DSA}{Digital Signature Algorithm}
\acrodef{ECDSA}{Elliptic Curve Digital Signature Algorithm}
\acrodef{KEM}{key encapsulation mechanism}
\acrodef{PKC}{public-key cryptosystem}
\acrodef{SK}{secret key}
\acrodef{PK}{public key}
\acrodef{CCA}{chosen ciphertext attack}
\acrodef{IND-CCA}{indistinguishability under chosen ciphertext attack}
\acrodef{IND-CCA2}{indistinguishability under adaptive chosen ciphertext attack}
\acrodef{IND-CPA}{indistinguishability under chosen plaintext attack}
\acrodef{KI}{Kobara-Imai}
\acrodef{PFS}{Perfect Forward Secrecy}
\acrodef{NP}{nondeterministic-polynomial}
\acrodef{DRBG}{deterministic random bit generator}
\acrodef{TRNG}{true random number generator}
\acrodef{KDF}{key derivation function}
\acrodef{AKE}{authenticated key exchange}
\acrodef{KAT}{Known Answer Tests}
\newcommand{\ps}[1]{#1}
\newcommand{\mb}[1]{#1}
\newcommand{\ab}[1]{#1}
\begin{document}

\mainmatter  

\title{\sysacro{}: a post-quantum key encapsulation mechanism based on QC-LDPC codes}


%
%
\author{Marco Baldi\inst{1} \and Alessandro Barenghi\inst{2} \and Franco Chiaraluce\inst{1}, \\
Gerardo Pelosi\inst{2} \and Paolo Santini\inst{1}
}
\authorrunning{M. Baldi \and A. Barenghi \and F. Chiaraluce \and G. Pelosi\ \and P. Santini}


\institute{Universit\`a Politecnica delle Marche, Ancona, Italy\\
\mailsa\\
\and
Politecnico di Milano, Milano, Italy\\
\mailsb
}
%
%

\maketitle

\begin{abstract}
This work presents a new code-based key encapsulation mechanism (KEM) called \sysacro{}.
It is built on the Niederreiter cryptosystem and relies on quasi-cyclic low-density parity-check codes as 
secret codes, providing high decoding speeds and compact keypairs. 
\sysacro{} uses ephemeral keys to foil known statistical attacks, 
and takes advantage of a new decoding algorithm that provides faster decoding 
than the classical bit-flipping decoder commonly adopted in this kind of systems.
The main attacks against \sysacro{} are investigated, taking into account quantum speedups.
Some instances of \sysacro{} are designed to achieve different security levels against classical and quantum computers.
Some performance figures obtained through an efficient C99 implementation of \sysacro{} are provided.

\keywords{Code-based cryptography, key encapsulation mechanism, Niederreiter 
cryptosystem, post-quantum cryptography, quasi-cyclic low-density parity-check codes.}
\end{abstract}

\section{Introduction}

Devising efficient and robust post-quantum key encapsulation mechanisms \linebreak (KEMs)\acused{KEM} is an important and urgent research target, as also witnessed by the recent NIST call for post-quantum cryptographic systems \cite{NISTcall2016}.
Code-based\linebreak cryptosystems are among the most promising candidates to replace quantum-vulnerable primitives which are still relying on the hardness of the integer factorization or discrete logarithm problems, such as the Diffie-Hellman key exchange and the \ac{RSA} and ElGamal cryptosystems.
Indeed, Shor's algorithm \cite{Shor1997} can be used to solve both the integer 
factorization and the discrete logarithm problems in polynomial time with a quantum computer.
One of the problems for which no known polynomial time algorithm on a quantum computer exists is the decoding of a general linear code.
Indeed, such a problem belongs to the non deterministic-polynomial (NP)-complete computational equivalence class~\cite{Berlekamp1978,May2011}, which is widely believed to contain problems which have no polynomial time solution on a quantum computer.

The first code-based public-key cryptosystem relying on the general linear code decoding problem was proposed by McEliece in 1978 \cite{McEliece1978}, and used Goppa codes \cite{Goppa1970} to form the secret key.
Such a choice yields large public keys, which is the main limitation of Goppa code-based systems.
The Niederreiter cryptosystem \cite{Niederreiter1986} is a code-based cryptosystem exploiting the same trapdoor, but using syndromes and parity-check matrices instead of codewords 
and generator matrices as in McEliece.
When the same family of codes is used, Niederreiter and McEliece 
are equivalent \cite{Li1994} and therefore they achieve the same security levels.

Replacing Goppa codes with other families of more structured codes may reduce the public key size.
However, this may also compromise the system security, as it occurred with some first McEliece variants based on \ac{QC} codes \cite{Gaborit2005}, \ac{LDPC} codes \cite{Monico2000} and \ac{QC-LDPC} codes \cite{Otmani2008}, \ac{QD} codes \cite{Misoczki2009}, convolutional codes \cite{Londahl2012} and some instances based on \ac{GRS} codes \cite{Baldi2016joc,Berger2005}.
Nevertheless, some variants exploiting \ac{QC-LDPC} and \ac{QC-MDPC} codes \cite{Baldi2013,Baldi2008,Misoczki2013} have been shown to be able to achieve very compact keys without endangering security.

Recently, some new statistical attacks have been developed that exploit the information coming from decryption failures in \ac{QC-LDPC} and \ac{QC-MDPC} code-based systems to perform key recovery attacks, thus forcing to renew keys frequently in these systems~\cite{Fabsic2017,Guo2016}.

In this paper, we start from the \ac{QC-LDPC} code-based system proposed in \cite{Baldi2008,Baldi2013} and we develop a new \ac{KEM} based on the the Niederreiter cryptosystem.
We also introduce an improved decoding algorithm which exploits correlation among intentional errors seen by the private code. This way, the correction capability of the private code is exploited to the utmost, thus allowing to achieve significant reductions in the public key size.
We call the new system \sysacro{} and study its properties and security.
We take into account the fact that Grover's algorithm running on a quantum computer may be exploited to speedup attacks based on \ac{ISD} \mb{\cite{KT17,Vries2016}}, and we propose some sets of parameters for \sysacro{} achieving different security levels against attacks exploiting both classical and quantum computers.
We also describe an optimized software implementation of the proposed system and provide and discuss some performance figures.
\mb{\sysacro{} currently is one of the first round candidate algorithms of the NIST 
post-quantum cryptography standardization project \cite{NISTcall2016}, along with other 
code-based \acp{KEM}. In this work we will highlight the differences between our proposal
and the closest one among the others, i.e. BIKE \cite{BIKE2017}, which relies on \ac{QC-MDPC}
codes for its construction.}

The organization of the paper is as follows. In Section \ref{sec:description} we 
describe \sysacro{}. In Section \ref{sec:Security} we present its security 
analysis and in Section \ref{sec:Properties} its peculiar features. In 
Section \ref{sec:results} we discuss some implementation issues and we show some 
numerical results. Finally, some conclusions are drawn in Section \ref{sec:conclusion}.

\section{The \sysacro{} cryptosystem}
\label{sec:description}

The \sysacro{} cryptosystem is derived from the Niederreiter cryptosystem with the following main differences:
\begin{itemize}
\item Non-algebraic codes known as \ac{QC-LDPC} codes are used as secret codes.
\item The public code is neither coincident with nor equivalent to the private code.
\item Suitably designed iterative non-bounded-distance decoding algorithms are used.
\end{itemize}

The motivation for using \ac{QC-LDPC} codes as private codes is in the fact that 
these such codes are known to achieve important reductions in the public key size when 
used in this context \cite{Baldi2013,Misoczki2013}.
Moreover, when \ac{LDPC} codes are used as private codes, the public code cannot 
be either coincident with or equivalent to the private code.
Indeed, in such a case, an attacker could search for low weight codewords in the 
dual of the public code and find a sparse parity-check matrix of the private 
code which allows efficient decoding.

For this reason, following \cite{Baldi2013}, \sysacro{} uses a transformation 
matrix $Q$ that hides the sparse parity-check matrix $H$ of the private code 
into a denser parity-check matrix $L = HQ$ of the public code.
This also affects the error vector that must be corrected during decryption, 
which is obtained from the error vector used during encryption through 
multiplication by $Q$.
In this work, we show how it is possible to exploit the knowledge of $Q$ 
to design an ad-hoc decoding algorithm achieving very good performance in terms 
of both decoding speed and \ac{DFR}.

In fact, a well-known feature of \ac{LDPC} coding is that the decoding radius of iterative decoders is not sharp and cannot be estimated in a deterministic way.
It follows that some residual \ac{DFR} must be tolerated, and it must be estimated heuristically through Montecarlo simulations.
This is done for all the proposed instances of \sysacro{} in order to guarantee that they achieve a sufficiently low \ac{DFR}.
Providing quantitative estimates of the \ac{DFR} for the proposed instances
of \sysacro{} allows us to prevent attacks such as the ones described in~\cite{Fabsic2017,Guo2016}
changing the key either at each round of the \ac{KEM}, or before a sufficient amount of decoding failures are observed by the attacker.

\subsection{Coding background}
\label{subsec:prel}
A \ac{QC} code is defined as a linear block code with dimension $k=p k_{0}$ 
and length $n=p n_{0}$, in which each cyclic shift of a codeword by $n_{0}$ 
symbols results in another valid codeword.
It follows from their definition that \ac{QC} codes have generator and 
parity-check matrices in ``blocks circulant'' form or, equivalently, in ``circulants block'' form.
The latter is used in \sysacro{}. 
A $v\times v$ circulant matrix $A$ has the following form
\begin{equation}
A = \left[{\begin{array}{ccccc}
{a_{0}} & {a_{1}} & {a_{2}} & \cdots & {a_{v-1}}\\
{a_{v-1}} & {a_{0}} & {a_{1}} & \cdots & {a_{v-2}}\\
{a_{v-2}} & {a_{v-1}} & {a_{0}} & \cdots & {a_{v-3}}\\
\vdots & \vdots & \vdots & \ddots & \vdots\\
{a_{1}} & {a_{2}} & {a_{3}} & \cdots & {a_{0}}
\end{array}}\right].\label{eq:CircMatrix}
\end{equation}
According to its definition, any circulant matrix is regular, since all its rows
and columns are cyclic shifts of the first row and column, respectively.

The set of $v\times v$ binary circulant matrices forms an algebraic ring under
the standard operations of modulo-$2$ matrix addition and multiplication.
The zero element is the all-zero matrix, and the identity element is the $v\times v$ identity matrix.
The algebra of the polynomial ring $\mathbb{F}_2[x]/\langle x^v +1 \rangle$ is isomorphic to the ring of $v\times v$ circulant matrices over $\mathbb{F}_2$ with the following map
\begin{equation}
A \leftrightarrow a\left(x\right)=\sum_{i=0}^{v-1}a_{i} x^{i}.
\label{eq:CircPoly}
\end{equation}
According to \eqref{eq:CircPoly}, any binary circulant matrix is associated to 
a polynomial in the variable $x$ having coefficients over $\mathbb{F}_2$ which coincide with the entries of the first row of the matrix
\begin{equation}
a\left(x\right)=a_{0}+a_{1}x+a_{2}x^{2}+a_{3}x^{3}+\cdots+a_{v-1}x^{v-1}.\label{eq:CircPoly2}
\end{equation}
According to \eqref{eq:CircPoly}, the
all-zero circulant matrix corresponds to the null polynomial and the identity 
matrix to the unitary polynomial.
The ring of polynomials $\mathbb{F}_2[x]/\langle x^v +1 \rangle$ includes 
elements that are zero divisors which are mapped to singular circulant matrices 
over $\mathbb{F}_2$.
Avoiding such matrices is important in some parts of \sysacro{}, and smart ways 
exist to design non-singular circulant matrices.
As it will be described next, the main part of the secret key of \sysacro{} is 
formed by a binary \ac{QC-LDPC} code described through its parity-check matrix 
$H$.
Let $n$ denote the code length in bits and $k$ denote the code dimension in 
bits, then $H$ has size $(n-k) \times n = r \times n$, where $r$ is the code redundancy.

\subsection{Description of the primitives}

The main functions of \sysacro{} are described next.

\subsubsection{Key generation. \label{sec:KeyGen}}

Both private and public keys consist of binary matrices.
These matrices, in their turn, are formed by $p \times p$ circulant blocks, being $p$ an integer properly chosen.

\paragraph{Secret key.}

The key generation input is formed by:
\begin{itemize}
\item The circulant block size $p$ (usually in the order of some thousands bits).
\item The integer $n_0$ (usually between $2$ and $4$), representing the number of circulant blocks forming the matrix $H$.
\item The integer $d_v$, representing the row/column weight (usually between $15$ and $25$) of the circulant blocks forming the matrix $H$.
\item The vector of integers $\bar{m} = \left[m_0, m_1, \ldots, m_{n_0-1}\right]$, representing the row/column weights (each entry usually smaller than $10$) of the circulant blocks forming the matrix $Q$ (the structure of $Q$ is clarified below).
\end{itemize}
Given these inputs, the secret key is obtained as follows.

First, $n_0$ sparse circulant matrices with size $p \times p$ are generated at random.
Each of them has row/column weight $d_v$.
We denote such matrices as $H_0, H_1, \ldots, H_{n_0-1}$.
The secret low-density parity-check matrix $H$ is then obtained as
\begin{equation}
H = \left[H_0 | H_1 | H_2 | \ldots | H_{n_0-1} \right].
\label{eq:Hsecret}
\end{equation}

The size of $H$ is $p \times n_0 p$.
Other $n_0^2$ sparse circulant blocks $Q_{i,j}$ are then randomly generated to form the secret sparse matrix
\begin{equation}
Q = \left[
\begin{array}{cccc}
Q_{0,0} 		&	Q_{0,1} 	& \ldots & Q_{0,{n_0-1}}     \\
Q_{1,0} 		&   Q_{1,1} 	& \ldots & Q_{1,{n_0-1}}     \\
\vdots 		    & \vdots 		& \ddots & \vdots          \\
Q_{{n_0-1},0} 	& Q_{{n_0-1},1} 	& \ldots & Q_{{n_0-1},{n_0-1}} \\
\end{array}
\right].
\label{eq:Qsecret}
\end{equation}
The row/column weight of each block $Q_{i,j}$ is fixed according to the following matrix:
\begin{equation}
w(Q) = \left[
\begin{array}{cccc}
m_0 		&	m_1 		& \ldots & m_{{n_0-1}} \\
m_{n_0-1}	&   m_0 		& \ldots & m_{{n_0-2}} \\
\vdots 	    & \vdots 	    & \ddots & \vdots    \\
m_1 		& m_{{n_0-1}}	    & \ldots & m_0       \\
\end{array}
\right],
\end{equation}
such that each row and each column of $Q$ has weight $m = \sum_{i=0}^{n_0-1}{m_i}$.

The choice of the weights $\bar{m}=[m_0,m_1,\cdots, m_{n_0-1}]$ and the size $p$ 
of the circulant blocks composing it is very important since it allows to 
discern if $Q$ is invertible or not. 
In particular, denoting with $\mathbf{\Pi}\left\{\cdot\right\}$ the permanent of 
a matrix, the following theorem holds.

\begin{theorem}\label{the:InvQ}{Let $p>2$ be a prime such that 
$\text{ord}_p(2) = p-1$ and $Q$ be an $n_0\times n_0$ matrix of elements in 
$\mathbb{F}_2[x]/\langle x^p+1\rangle$; 
if $\mathbf{\Pi}\left\{w(Q)\right\}$ is odd and $\mathbf{\Pi}\left\{w(Q)\right\}<p$, 
then $Q$ is non singular.} 
\end{theorem}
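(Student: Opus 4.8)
The plan is to work inside the ring $R = \mathbb{F}_2[x]/\langle x^p+1\rangle$ and to show that $\det(Q)$, computed over $R$, is a unit; since $Q$ is a square matrix over a commutative ring, being a unit determinant is equivalent to $Q$ being non-singular. First I would exploit the hypothesis $\text{ord}_p(2) = p-1$ to factor the modulus. Because $p$ is an odd prime, over $\mathbb{F}_2$ one has $x^p+1 = (x+1)\,g(x)$, where $g(x) = 1 + x + \cdots + x^{p-1}$ is the $p$-th cyclotomic polynomial; the condition that $2$ is a primitive root modulo $p$ is exactly what guarantees that $g$ is irreducible over $\mathbb{F}_2$ (the irreducible factors of $g$ all have degree $\text{ord}_p(2)$, so $\text{ord}_p(2)=p-1$ leaves a single factor). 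The Chinese Remainder Theorem then yields $R \cong \mathbb{F}_2 \times \mathbb{F}_{2^{p-1}}$, the two projections being evaluation at $x=1$ and reduction modulo $g$, and an element of $R$ is a unit precisely when both of its images are nonzero.

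Next I would analyze the two images of $\det(Q)$ separately. For the $\mathbb{F}_2$-component, evaluation at $x=1$ is a ring homomorphism, so $\det(Q)|_{x=1} = \det\big(Q|_{x=1}\big)$, where $Q|_{x=1}$ is the integer matrix $w(Q)$ reduced modulo $2$. In characteristic $2$ the determinant and the permanent coincide, and reduction mod $2$ commutes with the permanent, so this equals $\mathbf{\Pi}\{w(Q)\} \bmod 2$. The hypothesis that $\mathbf{\Pi}\{w(Q)\}$ is odd therefore forces the $\mathbb{F}_2$-image of $\det(Q)$ to be $1$; equivalently, $\det(Q)$ has odd Hamming weight and in particular is nonzero in $R$.

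For the $\mathbb{F}_{2^{p-1}}$-component I would combine a size count with the weight bound. The kernel of reduction modulo $g$ inside $R$ has exactly two elements, since $|R| = 2^p$ while $|\mathbb{F}_{2^{p-1}}| = 2^{p-1}$; these two elements are $0$ and $g(x)$ itself, the all-ones polynomial, whose weight is exactly $p$. Hence $\det(Q)$ is nonzero in the $\mathbb{F}_{2^{p-1}}$-factor unless $\det(Q) \in \{0,g\}$. Having already excluded $0$, it remains to rule out $\det(Q)=g$, and here the bound $\mathbf{\Pi}\{w(Q)\} < p$ enters. Writing $\det(Q) = \sum_\sigma \prod_{i} Q_{i,\sigma(i)}$ and using that the weight of a product of elements of $R$ is at most the product of their weights, while the weight of a sum is at most the sum of the weights (reduction modulo $x^p+1$ can only merge or cancel terms, never create new ones), I obtain that the weight of $\det(Q)$ is at most $\sum_\sigma \prod_i w(Q)_{i,\sigma(i)} = \mathbf{\Pi}\{w(Q)\} < p$. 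Since $g$ has weight exactly $p$, this excludes $\det(Q)=g$, so $\det(Q)$ is nonzero in both factors and is therefore a unit.

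The two structural facts that make everything fit together — that the only odd-weight element lying in the kernel of reduction modulo $g$ is the all-ones polynomial $g$, and that $g$ has weight exactly $p$ — are where I expect the real content to sit; once these are isolated, the parity hypothesis kills the $\mathbb{F}_2$-factor and the strict inequality $\mathbf{\Pi}\{w(Q)\} < p$ kills the $\mathbb{F}_{2^{p-1}}$-factor. The main thing to be careful about is the direction of the weight estimate: it is only an upper bound, since cancellations may drive the weight of $\det(Q)$ strictly below $\mathbf{\Pi}\{w(Q)\}$, but an upper bound is all that is needed to separate $\det(Q)$ from the weight-$p$ element $g$.
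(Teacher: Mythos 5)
Your proof is correct and complete. Note that the paper itself omits the proof of this theorem (``\emph{Omitted for the sake of brevity}''), so there is no argument in the text to compare against; your chain of reasoning --- the CRT splitting $\mathbb{F}_2[x]/\langle x^p+1\rangle \cong \mathbb{F}_2 \times \mathbb{F}_{2^{p-1}}$ enabled by $\mathrm{ord}_p(2)=p-1$, the identification of $\det(Q)(1)$ with $\mathbf{\Pi}\{w(Q)\} \bmod 2$ via determinant-equals-permanent in characteristic $2$, the observation that the kernel of reduction modulo the all-ones polynomial is $\{0,\,1+x+\cdots+x^{p-1}\}$, and the subadditive/submultiplicative weight bound $\mathrm{wt}(\det Q)\le \mathbf{\Pi}\{w(Q)\}<p$ ruling out the weight-$p$ element --- is exactly the standard argument for this kind of statement and supplies precisely the content the paper leaves out.
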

\begin{proof}
{\it Omitted for the sake of brevity.}
\end{proof}
With this result, we can guarantee that, when the sequence $\bar{m}$ is properly 
chosen, the matrix $Q$ is always non singular, which is a necessary condition for 
the key generation process to be successful.

\begin{definition}
The \ac{SK} of \sysacro{} is formed by $\left\{H, Q\right\}$.
\end{definition}

Since both $H$ and $Q$ are formed by sparse circulant blocks, it is convenient to 
represent each of these blocks through the indexes of the symbols $1$ in their 
first row, i.e. adopt a sparse representation for them.
Each index of this type requires $\left\lceil \log_2(p) \right\rceil$ bits to 
be stored.
If we consider that the circulant blocks in any block row of $Q$ have overall 
weight $m = \sum_{i=0}^{n_0-1}{m_i}$, the size of \ac{SK} in bits is
\begin{equation}
S_{sk} = n_0 \left(d_v + m \right)  \left\lceil \log_2(p) \right\rceil.
\label{eq:Ssk}
\end{equation}

In practice, the secret matrices are generated through a \ac{DRBG}, seeded with 
a bit string extracted from a \ac{TRNG}. 
In this case, to obtain $H$ and $Q$ it is sufficient to know the \ac{TRNG} 
extracted seed of the \ac{DRBG} that has been used to generate the positions of 
their non-null coefficients, since this process is rather fast.
This approach allows reducing the size of the secret key to the minimum required,
as it is assumed that the \ac{TRNG} output cannot be compressed.
The entity of the reduction depends on the values of the parameters involved in 
\eqref{eq:Ssk}.
\paragraph{Public key.}

Starting from $H$ and $Q$, the following binary matrices are computed.
First of all, the matrix $L$ is obtained as
\begin{equation}
L = HQ = \left[L_0 | L_1 | L_2 | \ldots | L_{n_0-1} \right].
\label{eq:L}
\end{equation}

If both $d_v$ and $m$ are odd, then $L_{n_0-1}$ has full-rank.
In fact, $L_{n_0-1}=\sum_{i=0}^{n_0-1}{H_iQ_{i,{n_0-1}}}$ and has weight equal 
to $md_v-2c$ (where $c$ is the number of cancellations occurred in the product).
It is possible to demonstrate that if $md_v$ is odd and $md_v<p$ then 
$L_{n_0-1}$ is non-singular.

After inverting $L_{n_0}$, the following matrix is computed:
\begin{equation}
M = L_{n_0-1}^{-1} L = \left[M_0 | M_1 | M_2 | \ldots | M_{n_0-2} | I \right] = \left[M_l | I \right].
\end{equation}

\begin{definition}
The \ac{PK} of \sysacro{} is formed by $M_l = \left[M_0 | M_1 \right.$ $\left.| M_2 | \ldots | M_{n_0-2} \right]$.
\end{definition}

Since the circulant blocks forming $M_l$ are dense, it is convenient to store 
them through the binary representation of their first row (the other rows are 
then obtained as cyclic shifts of the first row).
The bit-size of the \ac{PK} hence is
\begin{equation}
S_{pk} = \left(n_0-1\right) p.
\end{equation}

\subsubsection{Encryption. \label{sec:Enc}}

The plaintext of \sysacro{} is an ephemeral random secret generated by Bob
who is willing to share it with Alice.
The encryption inputs are:
\begin{itemize}
\item The values of $n_0$ and $p$, from which $n = n_0 p$ is computed.
\item The number of intentional errors $t \ll n$.
\end{itemize}

Bob generates a secret in the form of a random binary vector $e$ with length of 
$n = n_0p$ bits and Hamming weight $t$.
Given a \ac{KDF}, the shared secret key $k_s$ is generated from $e$ as 
$k_s = \text{\sc KDF}(e)$. 
In order to encapsulate the shared secret $e$, Bob fetches Alice's \ac{PK} 
$M_l$ and computes $s = \left[M_l | I \right] e^T$
where $^T$ denotes matrix transposition. The $p \times 1$ syndrome vector $s$ 
representing the encapsulated secret is then sent to Alice.

\subsubsection{Decryption. \label{sec:Dec}}

In order to perform decryption, Alice must recover $e$ from $s$.
The latter can be written as $s = M e^T = L_{n_0-1}^{-1} L e^T = L_{n_0-1}^{-1} H Q e^T.$
The first decryption step for Alice is computing $s' = L_{n_0-1} s = H Q e^T$.
For this purpose, Alice needs to know $L_{n_0-1}$ that, 
according to \eqref{eq:L}, is the last circulant block of the matrix $HQ$. 
Hence, it can be easily computed from the \ac{SK} which contains both $H$ and $Q$.
If we define the \textit{expanded error vector} as
\begin{equation}
e' = e Q^T,
\end{equation}
then we have $s' = H e'^T$.
Hence, \ac{QC-LDPC} decoding through $H$ can be exploited for recovering $e'$ 
from $s'$. \ac{QC-LDPC} decoders are not bounded distance decoders, and some 
\ac{DFR} must be tolerated.
However, the system parameters can be chosen such that the \ac{DFR} is 
acceptably small.
For this purpose, the average decoding radius of the private code must be 
sufficiently larger than the Hamming weight of $e'$, which is approximately 
equal to $mt$ (due to the sparsity of $Q$ and $e$).
Then, multiplication by $\left(Q^T\right)^{-1}$ would be needed to obtain $e$ 
from $e'$, that is,
\begin{equation}
e = e' \left(Q^T\right)^{-1}.
\end{equation}

However, by exploiting the efficient decoding algorithm described in Section~\ref{subsec:Qdec}, this last step can be avoided, which also allows avoiding the computation and storage of $\left(Q^T\right)^{-1}$ as part of the 
secret key.
In fact, the decoding algorithm described in Section~\ref{subsec:Qdec} allows recovering $e$ directly by performing decoding of $s' = L_{n_0-1} s = H Q e^T$ 
through $H$, while taking into account the effect of the multiplication of $e$ by $Q$.
Then, the secret key is recovered as $k_s = \text{\sc KDF}(e)$.

In case a decoding error occurs, the decryption procedure derives the shared secret combining with a KDF the syndrome with a secret constant, which may be derived via a PRNG from the secret key material~\cite{cryptoeprint:2017:1005}. \mb{Alternatively, using a secret permutation of the syndrome as input to the KDF was noted to be effective in~\cite{cryptoeprint:2017:604}. Such an approach which is beneficial from the security standpoint in case of an accidental keypair reuse. More details concerning this aspect, which is related to formal security of \sysacro{}, will be given in Section \ref{sec:Properties}.}
According to this approach, Bob will become aware of the decoding failure upon reception of the message sent by Alice encrypted with the incorrectly derived shared secret.

\subsection{Efficient decoding \label{subsec:Qdec}}

Classical \ac{BF} decoding works as follows.
At each iteration, for each codeword bit position, the number of unsatisfied 
parity-check equations is computed, and if this number equals or exceeds a given 
threshold, then that bit is flipped.
The decision threshold can be chosen in many ways, affecting the decoder 
performance, and it can be fixed or it can vary during iterations.
A choice that often turns out to be optimal is to fix the threshold, at each 
iteration, as the maximum number of unsatisfied parity-check equations in which 
any codeword bit is involved.
In fact, a codeword bit participating in a higher number of unsatisfied 
parity-check equations can be considered less reliable than a codeword bit 
participating in a smaller number of unsatisfied parity-check equations.
So, if the threshold is chosen in this way, the bits that are flipped are those 
that are most likely affected by errors.

Starting from classical \ac{BF}, we have developed an improved decoder that is 
specifically designed for \sysacro{}, where the position of the ones in the 
expanded error vector $e'$ to be corrected is influenced by the value of $Q^T$, 
as $e'$ is equivalent to a random error vector $e$ with weight $t$ multiplied by 
$Q^{T}$.
Since this improved decoder takes into account such a multiplication by the 
transpose of matrix $Q$ to estimate with greater efficiency
the locations of the bits of the expanded error vector, we denote it as 
\textit{Q-decoder}.

Inputs of the decoder are the syndrome $s'$ and 
the matrices $H$ and $Q$ according to \eqref{eq:Hsecret} and \eqref{eq:Qsecret}, 
respectively. 
The output of the decoder is a $1 \times n$ vector $\hat{e}$ or a decoding 
failure, where $\hat{e}$ represents the decoder estimate of the error vector 
$e$ appearing in the equality $s'= H Q e^T$.
The decoding process performs a maximum of $l_{max}$ iterations, where the $l$-th iteration processes $s^{(l-1)}$ and $\hat{e}^{(l-1)}$ (that is the values at the previous iteration) and outputs $s^{(l)}$ and 
$\hat{e}^{(l)}$.
\ps{A threshold criterion is adopted to compute the positions in $\hat{e}^{(l)}$ that must be changed.
The threshold values $b^{(l)}$ can be chosen in different ways and affect the decoder performance. In the next section we describe a simple and effective procedure to design such values.}
The decoder initialization is performed by setting $s^{(0)}= s'^T$ and 
$\hat{e}^{(0)}=0_n$, where $0_n$ is the length-$n$ vector with all-zero entries.
\mb{It is important to note that $s^{(0)}$ (and, by extension, $s^{(l)}$) is a row vector. Moreover, let us consider that all multiplications are binary, expect those denoted with `$*$', which are performed in the integer domain $\mathbb{Z}$.} The $l$-th iteration of the Q-decoder performs the following operations:
\begin{enumerate}
\item Compute \mb{$\Sigma^{(l)}=\left[\sigma_1^{(l)},\sigma_2^{(l)},\cdots,\sigma_n^{(l)}\right]=s^{(l-1)}* H$, resulting in a vector of integers having entries between $0$ and $d_v$}.
\item Compute \mb{$R^{(l)}=\left[\rho_1^{(l)},\rho_2^{(l)},\cdots,\rho_n^{(l)}\right]=\Sigma^{(l)} * Q$}.
\item \ps{Define $\Im^{(l)}=\left\{v \in [1,n]| \hspace{1mm} \rho^{(l)}_{v}\geq b^{(l)}\right\}$.}
\item \ps{Update $\hat{e}^{(l-1)}$ as
\begin{center}
$\hat{e}^{(l)}=\hat{e}^{(l-1)}+1_{\Im^{(l)}}$
\end{center}
where $1_{\Im^{(l)}}$ is a length-$n$ binary vector with all-zero entries, except those indexed by $\Im^{(l)}$.}
\item \ps{Update the syndrome as
\begin{center}
$s^{(l)}=s^{(l-1)}+\sum_{v\in \Im^{(l)}}{q_v} H^T$
\end{center}
where $q_v$ is the $v$-th row of $Q^T$.}
\item If the weight of $s^{(l)}$ is zero then stop decoding and return $\hat{e}^{(l)}$.
\item If $l < l_{max}$ then increment $l$ and go back to step i), 
otherwise stop decoding and return a decoding failure. 
\end{enumerate}

As in classical \ac{BF}, the first step of this algorithm computes the vector $\Sigma^{(l)}$. 
Each entry of this vector counts the number of unsatisfied parity-check 
equations corresponding to that bit position, and takes values in 
$\{0,\ldots, d_v\}$.
This evaluates the likelihood that the binary element of $e'$ at the same position is equal to one.
Differently from classical \ac{BF}, in step ii) the correlation $R^{(l)}$ 
between these likelihoods and the rows of $Q^T$ is computed.
In fact, the expanded error vector $e'=eQ^T$ can be written as the sum of the 
rows of $Q^T$ indexed by the support of $e$, that is 
$e'=\sum_{j\in \Psi\left\{e\right\}}{q_j}$
where $\Psi\left\{e\right\}$ denotes the support of $e$. 

Since both $Q$ and $e$ are sparse (that is, $m,t \ll n$), cancellations between 
ones in the sum are very unlikely.
When the correlation between $\Sigma^{(l)}$ and a generic row $q_v$ of $Q^T$ 
is computed, two cases may occur:
\begin{itemize}
\item If $v\notin\Psi\left\{e\right\}$, then it is very likely that $q_v$ has 
a very small number of common ones with all the rows of $Q^T$ forming $e'$,
hence the correlation is small.
\item If $v\in\Psi\left\{e\right\}$, then $q_v$ is one of the rows of $Q^T$ 
forming $e'$, hence the correlation is large.
\end{itemize}

The main difference with classical \ac{BF} is that, while in the latter all 
error positions are considered as independent, the Q-decoder exploits the 
correlation among expanded errors which is present in \sysacro{}, since their positions are influenced by $Q^T$.
This allows achieving important reductions in the number of decoding iterations.
As a further advantage, this decoder allows recovering $e$, besides $e'$, 
without the need of computing and storing the inverse of the matrix $Q^T$.
For this purpose, it is sufficient that, at each iteration, the Q-decoder flips the bits of the estimated error vector $e$ that correspond to the correlations values overcoming the threshold.

\subsection{Choice of the Q-decoder decision thresholds}
\label{sec:choice}

One important aspect affecting performance of the Q-decoder is the choice of the 
threshold values against which the correlation is compared at each iteration.
A natural choice is to set the threshold used at iteration $l$ equal to the maximum value of the correlation $R^{(l)}$ \ps{, that is $b^{(l)}=\max_{j=1,2,\cdots,n}{\left\{\rho_j^{(l)}\right\}}$}.
This strategy ensures that only those few bits that have maximum likelihood of 
being affected by errors are flipped during each iteration, thus achieving the 
lowest \ac{DFR}.
However, such an approach has some drawbacks in terms of complexity, since the computation of 
the maximum correlation requires additional computations with respect to a 
fixed threshold.

Therefore, \ps{as in \cite{Chaulet2016},} we consider a different strategy, which allows computing the 
threshold values on the basis of the syndrome weight at each iteration.
According to this approach, during an iteration it is sufficient to compute the 
syndrome weight and read the corresponding threshold value from a look-up table.
This strategy still allows to achieve a sufficiently low \ac{DFR}, while employing
a significantly smaller number of decoding iterations.

Let us consider the $l$-th iteration of the Q-decoder, and denote by $t_l$ the 
weight of the error vector $e^{(l)}$ and with $t'_l$ the weight of the 
corresponding expanded error vector $e'^{(l)} = e^{(l)}Q^T$.
Let us introduce the following probabilities \cite{Baldi2012}:
\begin{align}
p_{ci}(t'_l) &= \sum_{\text{$j=0$, $j$ odd}}^{\min\left[n_0d_v-1,t'_l\right]}{\frac{\binom{n_0d_v-1}{j}\binom{n-n_0d_v}{t'_l-j}}{\binom{n-1}{t'_l}}} \nonumber \\
p_{ic}(t'_l) &= \sum_{\text{$j=0$, $j$ even}}^{\min\left[n_0d_v-1,t'_l-1\right]}{\frac{\binom{n_0d_v-1}{j}\binom{n-n_0d_v}{t'_l-j-1}}{\binom{n-1}{t'_l-1}}}
\label{eq:probabilities}
\end{align}
where $p_{ci}(t'_l)$ is the probability that a codeword bit is error-free and a 
parity-check equation evaluates it to be incorrect, and $p_{ic}(t'_l)$ is the probability that a codeword bit is error-affected and a 
parity-check equation evaluates it to be correct.
In both these cases, the syndrome bit is equal to $1$.
The probability that each syndrome bit is equal to $1$ can be therefore computed 
as $p_{ic}(t'_l)+p_{ci}(t'_l)$, so the average syndrome weight at iteration $l$ results in
\begin{equation}
w_s^{(l)}=E\left[wt\left\{s^{(l)}\right\}\right]=\left[p_{ic}(t'_l)+p_{ci}(t'_l)\right]p
\label{eq:Ave_ws}
\end{equation}  
where $wt\left\{\cdot\right\}$ denotes the Hamming weight.
Since both the parity-check matrix and the error vector are sparse, the 
probability of $wt\left\{s^{(l)}\right\}$ being significantly different from 
$w_s^{(l)}$ is negligible.

So, \eqref{eq:Ave_ws} allows predicting the average syndrome weight starting 
from $t'_l$.
In order to predict how $t'_l$ varies during iterations, let us consider the 
$i$-th codeword bit and the corresponding correlation value $\rho_i^{(l)}$ at the $l$-th
iteration.
The probability that such a codeword bit is affected by an error can be written as
\begin{equation}
P\left\{e_i=1|\rho_i^{(l)}\right\}=\frac{P\left\{e_i=1,\rho_i^{(l)}\right\}}{P\left\{\rho_i^{(l)}\right\}}=\left(1+\frac{P\left\{e_i=0,\rho_i^{(l)}\right\}}{P\left\{e_i=1,\rho_i^{(l)}\right\}}\right)^{-1}
\label{eq:Pcond}
\end{equation}
where $e_i$ is the $i$-th bit of the error vector used during encryption.
After some calculations, we obtain
\begin{equation}
P\left\{e_i=1|\rho_i^{(l)}\right\}=\frac{1}{1+\frac{n-t_l}{t_l}\left(\frac{p_{ci}(t_l)}{p_{ic}(t_l)}\right)^{\rho_i^{(l)}}\left(\frac{1-p_{ci}(t_l)}{1-p_{ic}(t_l)}\right)^{md_v-\rho_i^{(l)}}}
\label{eq:Pei1}
\end{equation}
where $p_{ci}(t_l)$ and $p_{ic}(t_l)$ are given in \eqref{eq:probabilities}, 
with $t_l$ as argument instead of $t'_l$.

Adding the $i$-th row of $Q^T$ to the expanded error vector $e'$ is the same as 
flipping the $i$-th bit of the error vector $e$.
Hence, we can focus on $e$ and on how its weight $t_l$ changes during decoding 
iterations.
The values of $t'_l$ can be estimated using \eqref{eq:Ave_ws}, while, due to sparsity, those of $t_l$ can be estimated as $t'_l/m$.

The decision to flip the $i$-th codeword bit is taken when the following 
condition is fulfilled
\begin{equation}
P\left\{e_i=1|\rho_i^{(l)}\right\}>(1+\Delta)P\left\{e_i=0|\rho_i^{(l)}\right\}
\label{eq:PeiCondition}
\end{equation}
where $\Delta \geq 0$ represents a margin that must be chosen taking into account 
the \ac{DFR} and complexity: increasing $\Delta$ decreases the \ac{DFR} but 
increases the number of decoding iterations.
So, a trade-off value of $\Delta$ can be found that allows achieving a 
low \ac{DFR} while avoiding unnecessary large numbers of iterations.

Since $P\left\{e_i=0|\rho_i^{(l)}\right\}=1-P\left\{e_i=1|\rho_i^{(l)}\right\}$, \eqref{eq:PeiCondition} can be rewritten as
\begin{equation}
P\left\{e_i=1|\rho_i^{(l)}\right\}>\frac{1+\Delta}{2+\Delta}.
\label{eq:BoundFlip}
\end{equation}
$P\left\{e_i=1|\rho_i^{(l)}\right\}$ is an increasing function of 
$\rho_i^{(l)}$, hence the minimum value of $\rho_i^{(l)}$ such that 
\eqref{eq:BoundFlip} is satisfied can be computed as
\begin{equation}
b^{(l)}=\min{\left\{\rho_i^{(l)} \in [0, md_v], \quad {\rm s.t.} \quad P\left\{e_i=1|\rho_i^{(l)}\right\}>\frac{1+\Delta}{2+\Delta}\right\}}
\label{eq:Optbl}
\end{equation}
and used as the decision threshold at iteration $l$.

Based on the above considerations, the procedure to compute the decision threshold 
value per each iteration as a function of the syndrome weight can be summarized 
as follows:
\begin{enumerate}
\item The syndrome weights corresponding to $t'_l=0,m,2m,\cdots,mt$ (which are 
all the possible values of $t'_l$ neglecting cancellations) are computed 
according to \eqref{eq:Ave_ws}. These values are denoted as 
$\left\{w_s(0), w_s(m), \cdots, w_s(mt)\right\}$.
\item At iteration $l$, given the syndrome weight $\bar{w_s}^{(l)}$, the integer 
$j\in [0,t]$ such that $w_s(jm)$ is as close as possible to $\bar{w_s}^{(l)}$ is computed.
\item Consider $t_l = j$ and compute $b^{(l)}$ according to \eqref{eq:Optbl} and 
\eqref{eq:Pei1}. The value of $b^{(l)}$, so obtained, is used as the decoding 
threshold for iteration $l$.
\end{enumerate}

The above procedure can be implemented efficiently by populating a look-up table 
with the pairs $\left \{ w_j, b_j \right \}$, sequentially ordered.
During an iteration, it is enough to compute $\bar{w_s}^{(l)}$, search the largest
$w_j$ in the look-up table such that $w_j<\bar{w_s}^{(l)}$ and set $b^{(l)}=b_j$.

We have observed that, moving from large to smalle values of $w_j$, the thresholds computed this way firstly exhibit a decreasing trend,
then start to increase.
According to numerical simulations, neglecting the final increase is beneficial 
from the performance standpoint. Therefore, in the look-up table we replace the 
threshold values after the minimum with a constant value equal to the minimum 
itself.
\mb{\subsection{Relations with QC-MDPC code-based systems
\label{sec:qcmdpcSecurity}}}
\mb{In \sysacro{}, the public code is a \ac{QC-MDPC} code that admits $L=HQ$ as a valid parity-check matrix. 
However, differently from \ac{QC-MDPC} code-based schemes, the private code is a \ac{QC-LDPC} code, which facilitates decoding.
In fact, decoding directly the public \ac{QC-MDPC} code through classical \ac{BF} decoders would be a possibility, but the approach we follow is different.
By using the decoding algorithm described in Section~\ref{subsec:Qdec}, we decode the private \ac{QC-LDPC} code, taking into account the correlation introduced in the private error vector due to multiplication by $Q^T$.
Since the private \ac{QC-LDPC} matrix is sparser than the \ac{QC-MDPC} matrix of the public code, this yields lower decoding complexity.}

\mb{Besides working over different matrices, the main difference between these two decoding algorithms is in the use of integer multiplications in our decoder, while all multiplications are performed over $\mathbb{F}_2$ in classical \ac{BF} decoders.
In fact, in our decoder we perform the following operation to compute $R^{(l)}$
\begin{equation}
\label{eq_R}
R^{(l)}=s^{(l-1)} * H * Q= e Q^T H^T * H * Q \approx e L^T * L
\end{equation}
where the last approximation comes from the fact that, for two sparse matrices $A$ and $B$, we have $A \cdot B \approx A * B$.
Thus, we can say that $H Q \approx H * Q$.
So, if we consider classical \ac{BF} decoding working over the matrix $L = H Q$, the counter vector is computed as
\begin{equation}
\label{eq_He}
\Sigma^{(l)} = s^{(l-1)} * L = e L^T * L.
\end{equation}
}

\mb{
In the Q-decoder, the error vector is updated by summing rows of $Q^T$, which is equivalent to flipping bits of the public error vector.
Hence, there is a clear analogy between decoding of the private \ac{QC-LDPC} code through the Q-decoder and decoding of the public \ac{QC-MDPC} code through a classical \ac{BF} decoder.
Through numerical simulations we have verified that the two approaches yield comparable performance in the waterfall region.
Performance in the error floor region is instead dominated by the minimum distance of the code over which decoding is performed.
Since \ac{QC-LDPC} codes have smaller minimum distance than \ac{QC-MDPC} codes, this reflects into a higher error floor when decoding is performed over the private \ac{QC-LDPC} code.
However, no error floor has been observed during simulations of \sysacro{} with \ac{QC-LDPC} decoding, down to a \ac{DFR} between $10^{-9}$ and $10^{-8}$.
Since this is the working point of the codes we use, in terms of \ac{DFR}, we can say that the error floor effect, if present, is negligible from our scheme performance standpoint.\\
$\ $
}

\section{Security analysis}
\label{sec:Security}

\ab{\sysacro{} is constructed starting from the computational problem of syndrome decoding, i.e., 
obtaining a bounded weight error vector from a given syndrome and a general 
linear code, which was shown to be NP-complete in~\cite{Berlekamp1978}.
The main difference from the statement of the general hard problem on which
our proposal is built is the nature of the code employed, which is quasi-cyclic
and admits a representation with a low-density parity-check matrix.
To best of our knowledge, there is no superpolynomial advantage in performing syndrome
decoding on QC-LDPC, given our public code representation, either due to the 
quasi-cyclic form of the code or to the low density of its parity matrix.
We point out that the same assumption on the lack of advantage due to the quasi-cyclic structure of a code has also been done in
both the BIKE~\cite{BIKE2017} and the BIG QUAKE~\cite{BIGQUAKE2017} proposals.}
With these statements standing, the security analysis of \sysacro{} examines and 
quantifies the effectiveness of the best known attacks detailing the efficiency 
of algorithms running on both classical and quantum computers providing 
non-exponential speedups over an enumerative search for the correct error vector.
We remark that currently no algorithm running on either a classical Turing 
Machine (TM) or a quantum TM provides an exponential speedup in solving the 
computational problem underlying  \sysacro{} compared to an exhaustive search 
approach.

\subsection{Analysis of the algorithm with respect to known attacks}
As mentioned in the previous sections, \sysacro{} derives from \ac{QC-LDPC} 
code-based cryptosystems already established in the 
literature~\cite{Baldi2012,Baldi2014book}. 
As proved in \cite{Fabsic2017}, in case of using long-term keys, these 
cryptosystems may be subject to reaction attacks that are able to recover the 
secret key by exploiting the inherent non-zero \ac{DFR} they exhibit and Bob's 
reactions upon decryption failures. 
However, using ephemeral keys prevents the possibility to mount an attack of 
this kind, which requires long statistical evaluations.
Nevertheless, the risks in case of an accidental keypair reuse must be considered, and this will be done in Section \ref{sec:Properties}.

A first type of attacks that can be mounted against \sysacro{} are \acp{DA} 
aimed at performing decoding through the public code representation, without 
knowing the private code representation.
The most powerful algorithms that can be used for this purpose are \ac{ISD} 
algorithms. These algorithms aim at performing decoding of any linear block 
code by exploiting a general representation of it. \ac{ISD} algorithms have 
been introduced by Prange \cite{Prange1962} and subsequently improved by 
Lee-Brickell \cite{Lee1988}, Leon \cite{Leon1988} and Stern \cite{Stern1989}. 
More recently, they have known great advances through modern approaches, also 
exploiting the generalized birthday paradox 
\cite{May2011,Peters2010,Bernstein2011,Becker2012,Niebuhr2017}. It is possible 
to show that the general decoding problem is equivalent to the problem of finding 
low-weight codewords in a general (random-like) code. 
Therefore, algorithms for searching low-weight codewords can be used as \ac{ISD} 
algorithms.

The availability of an efficient algorithm to search for low-weight codewords is 
also at the basis of \acp{KRA}. In \sysacro{} the matrix $L=HQ$ is a valid 
parity-check matrix for the public code. Since $L$ is sparse, by knowing it 
an attacker could separate $H$ from $Q$ and recover the secret key. In order to 
discover $L$, an attacker must search for its rows in the dual of the public 
code. Due to the sparsity of $H$ and $Q$, any of these rows has weight in the 
order of $n_0 d_v m$. The attack can be implemented by exploiting again an efficient 
algorithm for the search of low-weight codewords in linear block codes.

Another potential attack to systems based on \ac{QC-LDPC} codes is that 
presented in \cite{Shooshtari2016}. This attack uses a special squaring 
technique and, by extracting the low-weight error vectors, finds low-weight 
codewords more efficiently than with a general \ac{ISD} algorithm. 
This attack, however, is applicable if and only if $p$ is even.
Therefore, in order to increase the system security it is advisable to choose 
odd values of $p$.
Choosing $p$ as a prime is an even more conservative choice against 
cryptanalysis exploiting factorization of $p$.
The value of $p$ in \sysacro{} is chosen in such a way to prevent these attacks.

To estimate complexity of \acp{DA} and \acp{KRA} exploiting \ac{ISD} 
and low-weight codeword searching algorithms, let us define the \ac{WF} of an 
algorithm as the base-2 logarithm of the average number of binary operations it 
requires to complete its execution successfully.
Let $WF(n,k,w)$ denote the \ac{WF} of the most efficient algorithm searching 
for codewords of weight $w$ in a code having length $n$ and dimension $k$.
Such an algorithm can be used to perform \ac{ISD} with the aim of decrypting a 
\sysacro{} ciphertext without knowing the private key.
In this case, we have $n=n_0p$, $k=(n_0-1)p$ and $w=t$. Moreover, due to the 
\ac{QC} nature of the codes, a speedup in the order of $\sqrt{p}$ must be taken 
into account \cite{Sendrier2011}.
Hence, the security level against decoding attacks of this type can be computed as
\begin{equation}
SL_{DA} = \frac{WF(n_0p,(n_0-1)p,t)}{\sqrt{p}}.
\label{eq:WFdec}
\end{equation}

Concerning the \acp{KRA} attack, based on the above considerations we have a 
similar formula, but with different parameters, that is,
\begin{equation}
SL_{KRA} = \frac{WF(n_0p,p,n_0 d_v m)}{p},
\label{eq:WFkra}
\end{equation}
where the speedup factor $p$ is due to the fact that recovering only one out of $p$ sparse rows of $L$, 
 is enough for the attacker (due to the \ac{QC} 
structure of $L$).

According to \cite{Vries2016}, the most efficient \ac{ISD} algorithm taking into 
account Grover's algorithm \cite{Grover1996} running on a quantum computer is 
Stern's algorithm.
Therefore, the post-quantum security levels have been estimated by considering 
the work factor of Stern's algorithm with quantum speedup according to \cite{Vries2016}.
Instead, with classical computers the most efficient \ac{ISD} algorithm turns 
out to be the BJMM algorithm in \cite{Becker2012}. Therefore, the security levels 
against attackers provided with classical computers have been estimated by 
considering the work factor of BJMM in \eqref{eq:WFdec} and 
\eqref{eq:WFkra}. 
\mb{We chose to employ the results provided in~\cite{Vries2016} to evaluate
the computational efforts of Stern's variant of the \ac{ISD} as they provide
exact formulas instead of asymptotic bounds.
However, we note that a recent work~\cite{KT17} provides improved asymptotic 
bounds on the computational complexity of quantum \ac{ISD} for increasing
values of the codeword length $n$. Deriving from this approach exact values for given parameters set is worth investigating.}

\subsection{System parameters}\label{subsec: Syspar}
The NIST call for Post-Quantum Cryptography Standardization \cite{NISTcall2016} 
defines $5$ security categories, numbered from $1$ to $5$ and characterized by 
increasing strength (see \cite{NISTcall2016} for details).
According to this classification, nine instances of \sysacro{} are proposed, grouped in three classes corresponding to different security levels.
The three instances in each class correspond to three values of $n_0$ ($2, 3, 4$), 
each one yielding a different balance between performance and public key size.
The parameters of the nine instances of \sysacro{} are reported in 
Table~\ref{tab:sys_parameters} for the security categories $1$, $3$ and $5$, 
respectively.
In the table, the superscript (pq) denotes that the attack work factor has been 
computed taking into account quantum speedups due to Grover's algorithm, while 
the superscript (cl) denotes that only classical computers have been considered.

For each security category and considered value of $n_0$, we have fixed a value 
of the parity-check matrix row/column weight $d_v$ in the order of $25$ or 
less (that is advisable to have good error correcting capability of the private 
\ac{QC-LDPC} codes), and we have found the values of $p$ and $m$ that allow satisfying \eqref{eq:WFkra} for the target security level.
\mb{In fact, the value of $m$ must be chosen such that the dual of the public code, having minimum distance equal to $n_0md_v$, is robust against \acp{KRA} based on \ac{ISD}.
Once $n_0$ is fixed, we can find many pairs of values $m$ and $d_v$ which satisfy this bound; among these, we have chosen the one having the lowest product $md_v$, which is a metric affecting the error correcting capability of the private code.
}
Then, we have found the value of $t$ that allows satisfying \eqref{eq:WFdec} 
and checked whether $t' = tm$ errors can be corrected by the private code 
through Q-decoding with a sufficiently low \ac{DFR}.
Otherwise, we have increased the value of $p$ keeping all the other parameters 
fixed.
Concerning the estimation of the \ac{DFR}, we have first exploited \ac{BF} 
asymptotic thresholds \cite{Baldi2012}, and then we have performed Montecarlo 
simulations for each system instance in order to evaluate its \ac{DFR}.
In all Montecarlo simulations, except the one for the Category 1, $n_0=2$ parameter
set, we have encountered no errors, so the \ac{DFR} can 
be approximately bounded by the reciprocal of the number of simulated decryptions.
Concerning the parameter set for Category 1, $n_0 = 2$, we obtained $20$ 
failures on $2.394\cdot 10^9$ decoding computations, pointing to a DFR $\approx 8.3\cdot 10^{-9}$.\newline \noindent
In order to make a conservative design of the system, we have considered some 
margin in the complexity estimates of the attacks, such that the actual security 
level for these instances is larger than the target one.
This also accounts for possible (though rare) cancellations occurring in $L$, 
which may yield a row weight slightly smaller than $m d_v n_0$.
The values of $d_v$ have been chosen greater than $15$ in order to avoid codes having too small minimum distances.
In addition, they are odd to ensure that the circulant blocks forming $H$ and $L$ 
(and $L_{n_0-1}$, in particular) have full rank.
Also the values of $m$ are always odd, and the sets $[m_0,m_1,\cdots,m_{n_0-1}]$ 
have been chosen in such a way to guarantee that $Q$ has full rank.
In fact, $L=HQ$ is a valid parity-check matrix for the public code: if $Q$ is 
singular, it might happen that the rank of $L$ is lower than $p$, leading to a 
code with a co-dimension lower than $p$.
With the choice of an invertible $Q$, we guarantee that this does not occur.

\begin{table}[!t]
\renewcommand{\arraystretch}{1.3}
\caption{Parameters for \sysacro{} and  estimated computational efforts to break a given instance as a function of the security category and number of circulant blocks $n_0$\label{tab:sys_parameters}
}
\centering
\footnotesize
\resizebox{\textwidth}{!}{%
\begin{tabular}{cc|ccccccccc}
\hline
\textbf{Category} & $\mathbf{n_0}$ & $\mathbf{p}$     & $\mathbf{d_v}$ & $\mathbf{[m_0,\cdots, m_{n_0-1}]}$  & $\mathbf{t}$   & $\mathbf{SL^{(pq)}_{DA}}$ & $\mathbf{SL^{(pq)}_{KRA}}$ & $\mathbf{SL^{(cl)}_{DA}}$ & $\mathbf{SL^{(cl)}_{KRA}}$ & $\mathbf{DFR}$    \\
\hline
\multirow{3}{*}{$1$} & $2$   & $27,779$ & $17$  & $[4,3]$                  & $224$ & $135.43$         & $134.84$          & $217.45$         & $223.66$          & $\approx$$8.3$$\cdot$$10^{-9}$\\
                     & $3$   & $18,701$ & $19$  & $[3,2,2]$                & $141$ & $135.63$         & $133.06$          & $216.42$         & $219.84$          & $\lesssim 10^{-9}$\\
                     & $4$   & $17,027$ & $21$  & $[4,1,1,1]$              & $112$ & $136.11$         & $139.29$          & $216.86$         & $230.61$          & $\lesssim 10^{-9}$\\
\hline                     
\multirow{3}{*}{$2$--$3$} & $2$ & $57,557$ & $17$ & $[6,5]$ & $349$ & $200.47$ & $204.84$ & $341.52$ & $358.16$ & $\lesssim 10^{-8}$    \\
                          & $3$ & $41,507$ & $19$ & $[3,4,4]$ & $220$ & $200.44$ & $200.95$ & $341.61$ & $351.57$ & $\lesssim 10^{-8}$  \\
                          & $4$ & $35,027$ & $17$ & $[4,3,3,3]$ & $175$ & $200.41$ & $201.40$ & $343.36$ & $351.96$ & $\lesssim 10^{-8}$\\
\hline                                           
\multirow{3}{*}{$4$--$5$} & $2$ & $99,053$ & $19$ & $[7,6]$ & $474$ & $265.38$ & $267.00$ & $467.24$ & $478.67$ & $\lesssim 10^{-8}$    \\                         
                          & $3$ & $72,019$ & $19$ & $[7,4,4]$ & $301$ & $265.70$ & $270.18$ & $471.67$ & $484.48$ & $\lesssim 10^{-8}$  \\
                          & $4$ & $60,509$ & $23$ & $[4,3,3,3]$ & $239$ & $265.48$ & $268.03$ & $473.38$ & $480.73$ & $\lesssim 10^{-8}$\\
\hline
\end{tabular}
}
\end{table}

\section{Properties of the proposed cryptosystem \label{sec:Properties}}
\mb{
The \ac{QC-LDPC} code-based Niederreiter cryptosystem alone achieves only \ac{IND-CPA}, that however is sufficient in case of using ephemeral keys.
It is possible to convert a Niederreiter cryptosystem achieving only \ac{IND-CPA} into one achieving \ac{IND-CCA}, under the assumption that the \ac{DFR} of the underlying code is zero.
Such a conversion involves substituting the outcome of a decoding failure (due to an ill-formed ciphertext) with the outcome of a \ac{KDF} taking as input either the public syndrome and a fixed secret bit sequence~\cite{cryptoeprint:2017:604,cryptoeprint:2017:1005}, or a secret permutation of the syndrome itself~\cite{Cayrel2017}.
We apply the conversion specified in~\cite{cryptoeprint:2017:604} to our scheme, despite its \ac{DFR} is not null, as it still proves beneficial in case of an accidental keypair reuse, against an attacker matching the \ac{IND-CCA} model whenever no decoding failures due to the \ac{QC-LDPC} code structure takes place.
}
Furthermore, we note that \sysacro{} ciphertexts are not malleable in a chosen plaintext scenario.
Indeed, even if an attacker alters arbitrarily a ciphertext so that it decrypts to a valid 
error vector $e$  (e.g., discarding the ciphertext  and 
forging a new one), the shared secret is derived via a hash based \ac{KDF}, 
which prevents him from controlling the output of the decryption.
\subsubsection{Relations with the security of \ac{QC-MDPC} code-based systems.}
\mb{Differently from \ac{QC-MDPC} code-based systems, the public code in \sysacro{} has a \ac{QC-MDPC} matrix $L$ that can be factorized into $H$ and $Q$, and this might appear to yielding lower security than a general \ac{QC-MDPC} matrix.
However, in order to attempt factorization of $L$, the attacker should first recover it by searching 
for low-weight codewords in the dual of the public code.
Once $L$ has been recovered, trying to factorize it into $H$ and $Q$ indeed becomes pointless, since 
the attacker could exploit $L$ to perform direct decoding of the public \ac{QC-MDPC} code.
Alternatively, an attacker could try to perform decoding of the public code, which requires solving the syndrome decoding problem for the same code.
The best known techniques for solving these two problems are based on \ac{ISD}, and no method is known to facilitate their solution by exploiting the fact that $L$ can be factorized into $H$ and $Q$.
}
\subsubsection{Risks in case of keypair reuse.}
While \sysacro{} uses ephemeral keys that are meant for single use, 
it is possible that implementation accidents lead to a reuse of the 
same keypair more than once.
The main threat in case of keypair reuse is the reaction attack described 
in~\cite{Fabsic2017}, where a correlation between the \ac{DFR} and the private key is derived.
However, for the attack to succeed, the attacker needs to reliably estimate 
the decoding failure rate for a set of carefully crafted or selected error vectors.
Given the DFR for which \sysacro{} was designed ($<10^{-8}$), obtaining a 
reliable estimate requires a number of decryptions with the same key in the 
order of billions. Since the said evaluation should be obtained for all the 
possible distances between two set bits in the secret key, a conservative 
estimate of the number of decryption actions required is $(p-1) \frac{1}{DFR}$, which, 
considering the weakest case, corresponding to Category $1$ with $n_0=2$,
yields $\gtrsim 2.7 \times 10^{12}$ decryptions.
Therefore, the attack presented in~\cite{Fabsic2017} is not a practical threat 
 on \sysacro{} with the proposed parameters, unless a significant amount of 
 decryptions are performed with the same key.
\mb{Moreover, even the \ac{CCA} described in \cite{Cayrel2017}, where a ciphertext
is crafted with a number of errors greater than $t$ to artificially increase the \ac{DFR} of the system,
can be thwarted through checking the weight of the decoded error vector and reporting 
a decoding failure if it exceeds $t$.}

\subsubsection{Protection against side-channel attacks.}
The two most common side channels exploited to breach practical implementations 
of cryptosystems are the execution time of the primitive and the instantaneous 
power consumption during its computation.
In particular, in~\cite{Fabsic2016}, it was shown how a \ac{QC-LDPC} code-based 
system can be broken by means of simple power analysis, exploiting the 
control-flow dependent differences of the decoding algorithm.
We note that employing ephemeral keys provides a natural resistance 
against non-profiled power consumption side channel attacks, as a significant 
amount of measurements with the same key ($>30$) must be collected before the 
key is revealed. 

Concerning execution time side channel information leakage, the main portion of
the \sysacro{} decryption algorithm which is not characterized by a constant execution time is decoding.
Indeed, the number of iterations made by the decoder depends on the values 
being processed.
However, for the proposed parameters, we note that the number of iterations is 
between 3 and 5, with a significant bias towards 4.
Hence, it is simple to achieve a constant time decoding by modifying the 
algorithm so that it always runs for the maximum needed amount of iterations to 
achieve the desired DFR. Such a choice completely eliminates the timing leakage,
albeit trading it off for a performance penalty.
\begin{table}[!t]
\small
\begin{center}
\caption{Running times for key generation, encryption and decryption as a 
function of the category and the number of circulant blocks $n_0$ on an 
AMD Ryzen 5 1600 CPU.\label{tab:runningtimes}}
\begin{tabular}{cc|cccc}
\toprule
\multirow{2}{*}{Category} & \multirow{2}{*}{$n_0$}  &  KeyGen  & Encrypt & Decrypt              & Total CPU time  \\
                          &                         &  (ms)    &  (ms)   &     (ms)             & Ephemeral KEM (ms)  \\
\midrule
\multirow{3}{*}{1}        & $2$ & $34.11$  ($\pm$$1.07$) & $2.11$   ($\pm$$0.08$) & $16.78$   ($\pm$$0.53$)  &   $52.99$  \\   
                          & $3$ & $16.02$  ($\pm$$0.26$) & $2.15$   ($\pm$$0.17$) & $21.65$   ($\pm$$1.71$)  &   $39.81$  \\   
                          & $4$ & $13.41$  ($\pm$$0.23$) & $2.42$   ($\pm$$0.08$) & $24.31$   ($\pm$$0.86$)  &   $40.14$  \\
\midrule                  
\multirow{3}{*}{2--3}     & $2$ & $142.71$ ($\pm$$1.52$) & $8.11$   ($\pm$$0.21$) & $48.23$   ($\pm$$2.93$)  &   $199.05$  \\   
                          & $3$ & $76.74$  ($\pm$$0.78$) & $8.79$   ($\pm$$0.20$) & $49.15$   ($\pm$$2.20$)  &   $134.68$  \\   
                          & $4$ & $54.93$  ($\pm$$0.84$) & $9.46$   ($\pm$$0.28$) & $46.16$   ($\pm$$2.03$)  &   $110.55$  \\  
\midrule                  
\multirow{3}{*}{4--5}     & $2$ & $427.38$  ($\pm$$5.15$) & $23.00$   ($\pm$$0.33$) & $91.78$   ($\pm$$5.38$)  &   $542.16$  \\   
                          & $3$ & $227.71$  ($\pm$$1.71$) & $24.85$   ($\pm$$0.37$) & $92.42$   ($\pm$$4.50$)  &   $344.99$  \\   
                          & $4$ & $162.34$  ($\pm$$2.39$) & $26.30$   ($\pm$$0.53$) & $127.16$  ($\pm$$4.42$)  &   $315.80$  \\ 
\bottomrule
 \end{tabular}
 \end{center}
\end{table}

\section{Implementation and numerical results}
\label{sec:results}

An effort has been made to realize a fast and efficient C99 implementation of \sysacro{} without platform-dependent optimizations, \mb{which is publicly available in \cite{LEDAcrypt}}.
To this end, we represented each circulant block as a polynomial in 
$\mathbb{F}_2[x]/\langle x^p +1 \rangle$ thanks to the isomorphism described in 
Section \ref{subsec:prel}.
Consequently, all the involved block circulant matrices are represented as
matrices of polynomials in $\mathbb{F}_2[x]/\langle x^p +1 \rangle$.
The polynomials are materialized employing a bit-packed form of their binary
coefficients in all the cases where the number of non null coefficients is high.
In case a polynomial has a low number of non null coefficients with respect to the 
maximum possible, i.e., the circulant matrix is sparse, 
we materialize only the positions of its one coefficients
as integers.

We provide below the results of a set of execution time benchmarks.
The results were obtained measuring the required time for key generation, 
encryption (key encapsulation) and decryption (key decapsulation) as a function 
of the chosen security category and the number of circulant blocks $n_0$. 
The measurements reported are obtained as the average of $100$ executions of the 
reference implementation.
The generated binaries were run on an AMD Ryzen 5 1600 CPU at 3.2 GHz, locking the 
frequency scaling to the top frequency.

Table~\ref{tab:runningtimes} reports the running times in terms of
CPU time taken by the process. As it can be noticed, the most computationally
demanding primitive is the key generation, which has more than $80$\% of its
computation time taken by the execution of a single modular inverse in 
$\mathbb{F}_2[x]/\langle x^p+1 \rangle$ required to obtain the value of 
$L_{n_0-1}^{-1}$.
The encryption primitive is the fastest among all, and its computation time is
substantially entirely devoted ($> 99\%$) to the $n_0-1$ polynomial multiplications
performing the encryption.
The decryption primitive computation is dominated by the Q-decoder computation 
($>95\%$ of the time), with a minimal portion taken by the $n_0$ modular 
multiplications which reconstruct $L_{n_0-1}$ and the one to compute
the private syndrome fed into the Q-decoder.

Considering the computational cost of performing a KEM with ephemeral keys, 
the most advantageous choice is to pick $n_0=4$ for any security level, although
the computational savings are more significant when considering high-security 
parameter choices (Category 3 and 5).

\begin{table}[!t]
\small
\begin{center}
\caption{Sizes of the keypair and encapsulated shared secret as a function of 
the chosen category and number of circulant blocks 
$n_0$. \label{tab:keysizes}}
 \begin{tabular}{cc|ccccc}
 \toprule
\multirow{2}{*}{Category} & \multirow{2}{*}{$n_0$}  & \multicolumn{2}{c}{\underline{Private Key Size (B)}} & Public Key  &  Shared secret  & Enc secret  \\
                          &                         &  At rest       &       In memory         &  size (B)   &  size (B)       & size (B)\\
\midrule                             
\multirow{3}{*}{1}    &  $2$ &  $24$ &  $  668$ & $ 3,480$ &  $3,480$ & $32$ \\ 
                      &  $3$ &  $24$ &  $  844$ & $ 4,688$ &  $2,344$ & $32$ \\ 
                      &  $4$ &  $24$ &  $1,036$ & $ 6,408$ &  $2,136$ & $32$ \\
\midrule                      
\multirow{3}{*}{2--3} &  $2$ &  $32$ &  $  972$ & $ 7,200$ &  $7,200$ & $48$ \\
                      &  $3$ &  $32$ &  $1,196$ & $10,384$ &  $5,192$ & $48$ \\
                      &  $4$ &  $32$ &  $1,364$ & $13,152$ &  $4,384$ & $48$ \\
\midrule                      
\multirow{3}{*}{4--5} &  $2$ &  $40$ &  $1,244$ & $12,384$ & $12,384$ & $64$ \\
                      &  $3$ &  $40$ &  $1,548$ & $18,016$ & $ 9,008$ & $64$ \\
                      &  $4$ &  $40$ &  $1,772$ & $22,704$ & $ 7,568$ & $64$ \\ 
 \bottomrule
 \end{tabular}
\end{center}
\end{table}

Table~\ref{tab:keysizes} reports the sizes of both the keypairs and the 
encapsulated secrets for \sysacro{}.
In particular, regarding the size of the private keys we report
both the size of the stored private key
and the required amount of main memory to store the expanded
key during the decryption phase. 
We note that, for a given security category, increasing the value of $n_0$
enlarges the public key, as it is constituted of $(n_0-1)p$ bits.
This increase in the size of the public key represents a tradeoff with the
decrease of the size of the ciphertext to be transmitted since it is only 
$p$ bits long, and $p$ decreases if a larger number of blocks is selected, for
a fixed security category.
The size of the derived encapsulated secret is at least $256$ bits, in order to
meet the requirement reported in \cite{NISTcall2016}.
The shared secret is derived employing the SHA-$3$ hash function with 
a $256$, $384$ or $512$ bits digest, in order to match the requirements of Categories $1$, $3$, and $5$, respectively.

\section{Conclusion}
\label{sec:conclusion}

We have introduced a post-quantum \ac{KEM} based on \ac{QC-LDPC} codes with the following advantages: it is built on an NP-complete problem under reasonable assumptions; it exploits improved \ac{BF} decoders which are faster than classical \ac{BF} decoders; it requires compact keypairs (below 23 kiB at most), with minimum size private keys; it needs only addition and multiplication over $\mathbb{F}_2[x]$, and modular inverse over $\mathbb{F}_2[x]/\langle x^p+1 \rangle$ besides single-precision integer operations; it is particularly efficient in applying countermeasures against non-profiled power consumption side channel attacks.
As regards implementation, no platform specific optimizations have been exploited, thus we expect these results to be quite consistent across different platforms. On the other hand, starting from this platform-agnostic reference implementation, a number of optimizations can be applied to make \sysacro{} faster.

\section*{Acknowledgments}
Paolo Santini was partly funded by Namirial SpA.

\bibliographystyle{splncs03}
\bibliography{Archive}
\end{document}